\documentclass[11pt]{article}

\usepackage{amssymb}

\newtheorem{thm}{Theorem}

\newtheorem{lemma}[thm]{Lemma}
\newtheorem{definition}[thm]{Definition}
\newtheorem{proposition}[thm]{Proposition}

\newenvironment{proof}{\noindent\bf{Proof.}\rm}{\hfill$\blacksquare$\bigskip}

\begin{document}

\title{On robustly asymmetric graphs}

\author{Uriel Feige~\thanks{
Department of Computer Science and Applied Mathematics, Weizmann Institute of Science, Rehovot 76100,
Israel. E-mail: {\tt uriel.feige@weizmann.ac.il}. Work supported in part by The Israel Science Foundation (grant No. 621/12) and by the I-CORE Program of the Planning and Budgeting Committee and The Israel Science Foundation (grant No. 4/11).}}

\maketitle

\begin{abstract}
O'Donnell, Wright, Wu and Zhou [SODA 2014] introduced the notion of robustly asymmetric graphs. Roughly speaking, these are graphs in which for every $0 \le \rho \le 1$, every permutation that permutes a $\rho$ fraction of the vertices maps a $\Theta(\rho)$ fraction of the edges to non-edges. We show that there are graphs for which the constant hidden in the $\Theta$ notation is roughly~1.
\end{abstract}

\section{Introduction}

Throughout, $n$ denotes the number of vertices in a graph and $m$ denotes the number of edges.

Given two (labeled) graphs $G$ and $H$ on $n$ vertices, their distance $dist(G,H)$ is half the Hamming distance between their adjacency matrices. If $G$ and $H$ have the same number of edges, then equivalently, $dist(G,H)$ is the number of pairs of vertices that are edges in $G$ but not edges in $H$ (or equivalently, edges in $H$ but not edges in $G$).
Given a permutation $\pi$ and a labeled graph $G$, the graph resulting from applying $\pi$ to the vertices of $G$ is denoted by $G_{\pi}$.
Given a set of $n$ elements, a $k$-permutation is a permutation that permutes $k$ of the elements and leaves exactly $n-k$ elements in place.

\begin{definition}
For $\delta > 0$, a graph $G(V,E)$ with $n$ vertices and $m$ edges is $\delta$-asymmetric if for every $1 \le k \le n$ and every $k$-permutation $\pi$,

$$dist(G,G_{\pi}) \ge \delta\frac{k}{n}m.$$

\end{definition}

In our theorems concerning asymmetric graphs we shall only consider graphs with at most $\frac{1}{2}{n \choose 2}$ edges. By taking their complements one can derive obvious consequences regarding graphs with more than $\frac{1}{2}{n \choose 2}$ edges, but these consequences are omitted from this manuscript.

Our main theorem is the following.

\begin{thm}
\label{thm:main}
For every $\delta < 1$ there is some constant $c > 0$, such that for every $n$ and for every $cn \le m \le n^2/c$, there is a graph $G$ with $n$ vertices and $m$ edges that is $\delta$-asymmetric.
\end{thm}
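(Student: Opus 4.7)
The plan is to use the probabilistic method on a random graph $G\sim G(n,p)$ with $p=m/\binom{n}{2}$, aiming to show that with positive probability $G$ has exactly $m$ edges and is $\delta$-asymmetric. First I would fix a $k$-permutation $\pi$ and consider its induced action on unordered vertex pairs, calling a pair $\{u,v\}$ \emph{moved} if $\{\pi(u),\pi(v)\}\neq\{u,v\}$. A direct count gives $T_\pi\ge k(n-1-k/2)$ moved pairs, and since for each such pair the two relevant edge indicators are independent Bernoulli$(p)$ variables, $E[dist(G,G_\pi)] = T_\pi\, p(1-p)$. Consequently $E[dist(G,G_\pi)]/((k/n)m) = 2T_\pi(1-p)/(k(n-1)) \ge (1-p)$ up to a $k/n$ correction, so choosing $c$ with $1-p\ge 1-2/c>\delta$ (plus some slack for concentration) makes the expectation exceed the $\delta$-target; the extremal case is $k$ close to $n$, where the ratio is tight.

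For concentration I would exploit the orbit structure of $\pi$'s action on pairs. Although the moved-pair indicators are not independent (a pair and its image share edge indicators), distinct orbits of this action depend on disjoint edge indicators, hence their contributions to $dist(G,G_\pi)$ are mutually independent. Each orbit of size $r$ contributes a sum of $r$ Bernoulli-like terms, bounded by $r$ with variance $O(rp)$, and Bernstein's inequality on the sum over orbits yields, for each fixed $\pi$, a failure bound of the form $\Pr[dist(G,G_\pi)<\delta(k/n)m]\le \exp(-\Omega(km/n))$. Since there are at most $n^k$ $k$-permutations, summing this over all $k$-permutations and then over $k\in\{2,\dots,n\}$ closes the argument whenever $m/n\gg \log n$, which handles the dense regime.

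The main obstacle is the sparse regime $m=\Theta(n)$. There the concentration bound $\exp(-\Omega(km/n))$ degenerates to $\exp(-\Omega(k))$ with a constant hidden in the $\Omega$, which cannot survive the $\binom{n}{2}$ union bound already for transpositions: for constant average degree $c=m/n$ the per-pair failure probability is itself a constant in $c$, so a random $G(n,p)$ will typically have many pairs $\{u,v\}$ with $|N(u)\triangle N(v)|<2\delta c$ and will fail to be $\delta$-asymmetric. Closing the sparse regime will therefore demand either a sharper probabilistic argument (for instance a Lov\'asz Local Lemma application exploiting that the bad event for a permutation $\pi$ only depends on edges near its moved vertices), or a supplementary construction such as a random $d$-regular graph with $d=2m/n$ on which standard pseudorandomness or high-girth properties force $|N(u)\triangle N(v)|$ to be close to $2d$ for every pair, together with analogous structural bounds for larger $k$. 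Obtaining a uniform constant $c=c(\delta)$ across the whole range $cn\le m\le n^2/c$ is where the real technical work will lie.
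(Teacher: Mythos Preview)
Your treatment of the dense regime $m \gg n\log n$ is essentially the paper's: it too works with $G_{n,p}$, computes the expected distance under a $k$-permutation via the count of moved pairs, and proves concentration (there by first pinning down the number $m_S$ of edges touching the moved set via Chernoff and then running an edge-exposure martingale, rather than your Bernstein bound over independent orbits, but the two devices are interchangeable). Both arguments yield $\delta \simeq 1-p$, with the extremal case at $k=n$.

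The genuine gap is the sparse regime $cn \le m \ll n\log n$, which you explicitly leave unproved. Your instinct that $G_{n,p}$ cannot be salvaged there is correct, but the Lov\'asz Local Lemma route you float will not work either: for $p=\Theta(1/n)$ the random graph almost surely contains isolated vertices and pairs of degree-one vertices with a common neighbor, so it is not even asymmetric, let alone $\delta$-asymmetric --- the bad events are not rare-but-nearly-independent, they actually occur with probability tending to~$1$. The paper closes this gap along the lines of your second suggestion, though with a model easier to analyse than random regular graphs: start from $G_{n,p}$ and add random auxiliary edges at each vertex until its degree reaches $d=\lceil p(n-1)\rceil$. One then shows that almost surely (i) the average degree is $d+O(\sqrt d)$, (ii) no two vertices share more than two common neighbors, and (iii) every set $S$ with $|S|\le n/d^2$ spans at most $3|S|$ edges. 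For a $k$-permutation with $k\le n/d^2$, properties (ii) and (iii) force at least $(d-8)k$ outgoing edges of the moved set to land on non-edges, yielding $\delta \ge 2 - O(1/\sqrt d)$ in this range; for larger $k$ the $G_{n,p}$ analysis carries over. Your proposal is thus on target in diagnosis and in the dense range, but stops exactly where the extra construction is required.
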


Theorem~\ref{thm:main} is best possible in the following sense. Taking an arbitrary $k$-permutation with $k = n$ implies that graphs cannot be $\delta$-asymmetric for $\delta > 1$, and not even for $\delta = 1$ (as there are $n$-permutations that preserve some edges). A value of $\delta = 1 - o(1)$ (where the $o(1)$ term tends to~0 when $n$ and $c$ grow) is the best that one can hope for, and indeed, our proofs indeed provide such bounds. (One can formulate Theorem~\ref{thm:main} with both $\delta$ and $c$ being functions of $n$, though for simplicity we have not done so.) However, for small values of $k$, the value of $\delta$ that can conceivably be achieved is not just~1, but in fact arbitrarily close to~2 (because an $\epsilon$ fraction of the vertices may be incident with nearly a $2\epsilon$  fraction of the edges). Indeed, {\em our proofs provide this improved bound}. The requirement that $m \ge cn$ is unavoidable because every regular graph has an $n$-permutation that preserves $\Omega(n)$ edges. The requirement that $m \le n^2/c$ is unavoidable because a random permutation will preserve roughly $m\frac{m}{{n \choose 2}} \simeq 2m/c$ edges in expectation.

Our proof of Theorem~\ref{thm:main} is based on the probabilistic method. That is, we do not explicitly construct graphs satisfying Theorem~\ref{thm:main}, but merely show their existence. In our proof, we consider two different ranges of values for $m$, and for each of them show a different randomized construction. These two ranges are handled by Theorem~\ref{thm:gnp} and by Theorem~\ref{thm:asymmetric}, and the combination of these two theorems implies Theorem~\ref{thm:main}. We now provide more details of these two randomized constructions.

Recall the random graph model $G_{n,p}$ of Erdos and Renyi in which every edge is included independently with probability $p$.

\begin{thm}
\label{thm:gnp}
For a sufficiently large constant $c$, for every $p$ satisfying $\frac{c\log n}{n} \le p \le \frac{1}{2}$, the random graph $G_{n,p}$ is $\delta$-asymmetric almost surely. Starting at the low end of probabilities for $p$, the value of $\delta$ tends to~1 as $c$ grows, and then at the high end (when $p$ is constant),  $\delta \simeq 1-p$.
\end{thm}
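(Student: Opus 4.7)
\noindent The plan is a union bound over all $k$-permutations for every $k\in\{2,\dots,n\}$. Fix $\delta<1$ and choose $c$ large (depending on $\delta$); sample $G\sim G_{n,p}$. For a $k$-permutation $\pi$, let $M_\pi$ denote the number of pairs $P\in{V\choose 2}$ with $\pi(P)\neq P$. Since a fixed pair must have both endpoints fixed by $\pi$ (at most ${n-k\choose 2}$ such pairs) or form a $2$-cycle of $\pi$ (at most $k/2$ such pairs), one gets $M_\pi\ge\frac{k(2n-k-2)}{2}$. The distance decomposes along the orbits of $\pi$ acting on ${V\choose 2}$: for an orbit $O=(P_0,\dots,P_{\ell-1})$ with $P_{i+1}=\pi(P_i)$ (cyclic) and $\ell\ge 2$, its contribution to $dist(G,G_\pi)=|E(G)\setminus E(G_\pi)|$ equals $D_O=|\{i:X_{P_i}=0,\,X_{P_{i+1}}=1\}|$, where $X_P=\mathbf{1}[P\in E(G)]$. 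The $X_{P_i}$ within a non-trivial orbit are i.i.d.\ $\mathrm{Ber}(p)$, so $\mathbb{E}[D_O]=\ell p(1-p)$ and $\mathbb{E}[dist(G,G_\pi)]=M_\pi p(1-p)$.

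The concentration step exploits the cyclic structure within each orbit. The $01$-transition indicators at arcs $(P_i,P_{i+1})$ and $(P_{i+2},P_{i+3})$ involve disjoint vertex-pairs, hence independent Bernoullis. Thus $D_O=D_O^{\mathrm{even}}+D_O^{\mathrm{odd}}$, where each summand is a sum of $\sim \ell/2$ independent $\mathrm{Ber}(p(1-p))$'s. Aggregating over orbits (whose pair-sets are pairwise disjoint) gives $dist(G,G_\pi)=S^{\mathrm{even}}+S^{\mathrm{odd}}$, with each $S^\bullet$ globally a sum of $\Omega(M_\pi)$ independent Bernoullis. The multiplicative Chernoff bound applied to each side, combined by a two-way union bound, yields
$$\Pr\bigl[dist(G,G_\pi)\le(1-\eta)M_\pi p(1-p)\bigr]\ \le\ 2\exp\bigl(-\Omega(\eta^2 M_\pi p(1-p))\bigr).$$
Using $M_\pi\ge k(n-1)/2$ and the crude count ${n\choose k}k!\le n^k$ of $k$-permutations, the union bound gives failure probability $\exp(k\log n - \Omega(\eta^2 knp(1-p)))$, which is $o(n^{-k})$ as soon as $\eta^2 c$ is a sufficiently large constant; summing over $k$ yields $o(1)$.

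Finally, the Chernoff concentration of $m$ around $p{n\choose 2}$ gives $\frac{dist(G,G_\pi)}{(k/n)m}\ge(1-o(1))(1-p)\cdot\frac{2n-k-2}{n-1}$. The last factor ranges in $(1-o(1),2]$ as $k$ varies over $[2,n]$, so the worst case $k=n$ yields $\delta=(1-o(1))(1-p)$, matching the high-end claim $\delta\simeq 1-p$, while smaller $k/n$ achieves $\delta$ approaching $2(1-p)$. Letting $c\to\infty$ drives $\eta\to 0$ and, at the low end, $p\to 0$, so $\delta\to 1$ there. The main obstacle is obtaining sharp enough concentration: Hoeffding or McDiarmid applied directly to $dist(G,G_\pi)$ as a function of $\{X_P\}$ is too lossy when $\pi$ has orbits of pairs of size $\Theta(n)$ (e.g.\ when $\pi$ is a single long cycle), since the per-coordinate bounded difference is fine but the ``effective sample size'' collapses. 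The odd/even arc decomposition rewrites each cyclic orbit's contribution as a sum of genuinely independent Bernoullis of the correct mean, which is exactly what Chernoff needs in order to beat the $n^k$ union bound.
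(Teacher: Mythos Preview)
Your orbit decomposition is a genuinely different route from the paper's. The paper first conditions on $m_S$ (the number of edges touching the support $S$ of $\pi$), shows $m_S\simeq p|P|$ by Chernoff, and then---with the $m_S$ edges now uniformly distributed in $P$---runs an edge-exposure martingale with Azuma's inequality to get $dist(G,G_\pi)\simeq m_S(1-m_S/|P|)\simeq p(1-p)|P|$. Your argument skips the conditioning entirely and works directly with the orbit structure of $\pi$ on $\binom{V}{2}$, which is cleaner and makes the appearance of the factor $p(1-p)$ completely transparent. Both land on the same sharp constant.

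There is, however, a real gap in your concentration step. The claim that the even-indexed arc indicators within an orbit are mutually independent fails when the orbit length $\ell$ is odd: the arc at position $0$ uses $X_{P_0},X_{P_1}$ and the arc at position $\ell-1$ uses $X_{P_{\ell-1}},X_{P_0}$, so they share $X_{P_0}$ (indeed their product is identically $0$). This is not a corner case you can ignore. If $\pi$ is a product of $n/3$ disjoint $3$-cycles (so $k=n$), then \emph{every} pair-orbit has length $3$, and your decomposition as written gives no independent family at all on the even side. Simply dropping one arc per odd orbit costs you a factor $2/3$ in this case, which would yield only $\delta\simeq\tfrac{2}{3}(1-p)$ at $k=n$---not enough for the theorem.

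The fix is small but you should state it. For an odd orbit, the two offending even-arc indicators satisfy $I_0 I_{\ell-1}=0$, so $I_0+I_{\ell-1}$ is itself a $\{0,1\}$-valued variable with mean $2p(1-p)$, and it depends only on $X_{P_0},X_{P_1},X_{P_{\ell-1}}$, hence is independent of the remaining even-arc indicators $I_2,I_4,\dots,I_{\ell-3}$ (which use $X_{P_2},\dots,X_{P_{\ell-2}}$). Thus $S^{\mathrm{even}}$ is globally a sum of independent $[0,1]$-valued random variables with total mean exactly $\tfrac12 M_\pi p(1-p)$ (no loss), and the multiplicative Chernoff bound for such sums gives the tail you need. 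With this correction your argument goes through and matches the paper's conclusion.
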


Theorem~\ref{thm:gnp} does not prove the existence of $\delta$-asymmetric graphs with a linear number of edges. The problem is that when $p = O(1/n)$, a graph sampled in the $G_{n,p}$ model is likely to have isolated vertices, and these can be permuted without changing the graph. Likewise, the graph is likely to have pairs of degree one vertices connected to the same vertex (or to each other), and the two vertices within such a pair can be permuted as well. The 2-core of $G_{n,p}$ may be asymmetric, but is unlikely to be $\delta$-asymmetric for constant values of $\delta$. It is likely to have chains of degree~2 vertices of super-constant length, and flipping such a chain permutes a super-constant number of vertices, while increasing the Hamming distance only by a constant.

To exhibit $\delta$-asymmetric graphs with a linear number of edges, one can consider random regular graphs. However, for simplicity of the proofs, we consider a model that we call $G_{n,p,d}$ that extends the $G_{n,p}$ model in the following way. Given a graph sampled from $G_{n,p}$, iteratively consider the vertices of $G$, where vertex $i$ is considered in iteration $i$. Let $d_i$ be the degree of vertex $i$ at that point. If $d_i \ge d$, do nothing at that iteration. If $d_i < d$, add $d - d_i$ {\em auxiliary edges} incident with $i$, connecting it to vertices who were not previously neighbors of $i$, chosen uniformly at random.

\begin{thm}
\label{thm:asymmetric}
For a sufficiently large constant $c$, let $\frac{c}{n} \le p \le \frac{(\log n)^2}{n}$, and let $d = \lceil p(n-1) \rceil$. Then the random graph sampled from $G_{n,p,d}$ is $\delta$-asymmetric almost surely, for $\delta = 1 - O(1/\sqrt{d})$.
\end{thm}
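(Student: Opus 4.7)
My plan is to prove Theorem~\ref{thm:asymmetric} by the probabilistic method, via a union bound over all $k$-permutations for $1 \le k \le n$. The key quantity is $dist(G, G_\pi) = |E_S| - P(\pi)$, where $S$ is the set of $k$ vertices moved by $\pi$, $E_S$ is the set of edges of $G$ touching $S$, and $P(\pi)$ counts the ``preserved'' edges, namely those $\{u,v\} \in E$ for which $\{\pi(u), \pi(v)\} \in E$ as well.

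First I would control the auxiliary edges of $G_{n,p,d}$. The underlying $G_{n,p}$-degree of each vertex is Binomial$(n-1,p)$ with standard deviation $O(\sqrt{d})$, so anti-concentration around the mean yields the expected per-vertex deficit $E[\max(0, d - \deg_{G_{n,p}}(v))] = O(\sqrt{d})$; a Bernstein-type bound then gives $O(n\sqrt{d})$ total auxiliary edges almost surely. These thus form an $O(1/\sqrt{d})$ fraction of $m \approx nd/2$, which accounts for the $1 - O(1/\sqrt{d})$ factor in $\delta$. Standard $G_{n,p}$ tail bounds further ensure that every vertex of $G_{n,p,d}$ has degree in $[d, d + O(\sqrt{d\log n})]$, so $|E_S| = (1 - o(1))\, kd\,(1 - k/(2n))$ uniformly in $S$.

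For a fixed $k$-permutation $\pi$, I would compute $E[P(\pi)]$ by decomposing pairs touching $S$ into orbits of the induced action of $\pi$ on pairs: each non-fixed orbit of length $\ell$ contributes $\ell p^2$ in expectation, while each fixed pair (arising only from a $2$-cycle of $\pi$) contributes $p$, giving $E[P(\pi)] = O(pkd)$. Hence $E[dist(G, G_\pi)] \ge (1 - O(p))|E_S| \ge \delta (k/n) m$, with additional factor-of-$2$ slack whenever $k$ is significantly less than $n$. A Chernoff-type concentration on $P(\pi)$, exploiting that within each pair-orbit the edge indicators are independent, yields a failure probability of $\exp(-\Omega(kd))$ per $\pi$, which together with the at-most-$n^k$ number of $k$-permutations gives a valid union bound whenever $d$ is a sufficiently large multiple of $\log n$.

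The hard part will be the lower end of the range, where $d$ may be only a large constant and the per-$\pi$ Chernoff bound cannot overcome the $n^k$ combinatorial cost for small $k$. There I would handle small $k$ (say $k \le \log n$) by direct first-moment subgraph counting: for each choice of moved set $S$, image set $T = \pi(S)$, and bijection between them, I would upper bound the expected number of graphs in which at least $(1 - \delta)|E_S|$ edges of $E_S$ are preserved; such an event forces a specific subgraph on $S \cup T$ with many edges to embed into $G$, and the resulting probability (a power of $p$ proportional to the number of forced edges) is small enough to beat the combinatorial cost of the choices. Auxiliary edges incident with $S$ (of number $O(k\sqrt{d})$) contribute at most that many extra preservations to $P(\pi)$, which is absorbed into the $O(1/\sqrt{d})$ term of $\delta$.
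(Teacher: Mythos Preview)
Your small-$k$ patch has a genuine gap. If $S$ is the moved set of a $k$-permutation $\pi$, then $\pi(S)=S$, so there is no separate ``image set'' $T$. More importantly, for small $k$ almost all edges in $E_S$ run from some $u\in S$ to some $w\in V\setminus S$, and such an edge is preserved exactly when $w$ is also a neighbor of $\pi(u)$. Hence a bad $\pi$ does not force a dense subgraph on $S\cup T=S$; it forces many \emph{common neighbors in $V\setminus S$} between $u$ and $\pi(u)$, for each $u\in S$. A subgraph-embedding count confined to $S$ therefore does not capture the obstruction. Note also that your stated per-$\pi$ tail of $\exp(-\Omega(kd))$ fails the union bound for \emph{every} $k$ once $d$ is constant, not only for small $k$; and the claim that all degrees lie in $[d,\,d+O(\sqrt{d\log n})]$ is false when $p=c/n$ (the maximum degree is $\Theta(\log n/\log\log n)$).

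The paper handles small $k$ (specifically $k\le n/d^2$) by a different mechanism that avoids any per-permutation union bound. It proves once, almost surely, that in $G_{n,p,d}$: every pair of vertices has at most two common neighbors; every set of size at most $n/d^2$ induces at most $3|S|$ edges; and $m=dn/2+O(n\sqrt d)$. These structural facts then deterministically bound \emph{every} small permutation: $S$ has at least $(d-6)k$ outgoing edges, of which at most two per vertex can be preserved, so $dist(G,G_\pi)\ge (d-8)k = (2-O(1/\sqrt d))\frac{k}{n}m$. For $k>n/d^2$ the paper reuses the $G_{n,p}$ argument (Proposition~\ref{pro:ms} and Lemma~\ref{lem:Spermutation}); the point is that in this range $\log\binom{n}{k}=O(k\log d)$, so the martingale error terms are $O(\sqrt{\log d/d})$ relative to the main term, which is what yields $\delta=1-O(1/\sqrt d)$.
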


\section{Related work}

A graph is said to be asymmetric if it has no nontrivial automorphism.
Erdos and Renyi~\cite{ER63} proved that a graph sampled at random from the $G_{n,p}$ model is asymmetric with high probability, for $p$ in the range $\frac{\ln n}{n} < p < 1 - \frac{\ln n}{n}$. Later work~\cite{Bol82, MW84, KSV02} established that random $d$-regular $n$-vertex graphs are asymmetric with high probability, for $d$ in the range $3 \le d \le n-4$.

The notion of robustly asymmetric graphs (under a slightly different definition, see the last sentence of this paragraph) was introduced by O'Donnell, Wright, Wu and Zhou, as part of work that studied the computational complexity of the {\em robust graph isomorphism} problem~\cite{OWWZ}. They prove a theorem similar to our Theorem~\ref{thm:gnp}, but with a value of $\delta = \frac{1}{240}$. They comment that they did not work hard to optimize the constants in the theorem statement, and that it is interesting to explore the limits
of these constants. Our Theorem~\ref{thm:gnp} achieves the best possible constant of $\delta \simeq 1$. (See also the discussion following Theorem~\ref{thm:main}.) For graphs of constant average degree, the authors of~\cite{OWWZ} suggest the study random $d$-regular graphs as a future research direction. Our Theorem~\ref{thm:asymmetric} considers a somewhat different distribution over random graphs that is presumably easier to analyze, and gives the type of results that one would hope to prove for random $d$-regular graphs. The paper~\cite{OWWZ} also contains results for $G_{n,p}$ with $p = O(1/n)$ (hence constant average degree), but these results refer to a version of robust asymmetry which disregards automorphisms that permute only a small fraction of the vertices.

\subsection{Notation and preliminaries}

For disjoint sets $S,T$ of vertices in a graph $G$, $E(S)$ denotes the set of edges induced by $S$, and $E(S,T)$ denotes the set of edges with one endpoint in $S$ and the other in $T$. The notation $E[x]$ denotes the expectation of a random variable $x$.

The reader is assumed to be familiar with standard probabilistic reasoning, and with large deviation bounds such as those of Chernoff and Azuma. See~\cite{AS} for example.

\section{Proofs}

We first prove Theorem~\ref{thm:gnp}.

\begin{proof}
Fix $\epsilon > 0$ to be a small constant. All results will hold when $n > n_0$, where $n_0$ may depend on $\epsilon$. In particular, we shall take $n_0 > 1/\epsilon$.

Consider a random graph $G(V,E)$ selected at random from $G_{n,p}$. Consider an arbitrary set $S\subset V$ of vertices and let $k = |S|$. We say that an edge $e \in E$ is {\em covered} by $S$ if at least one of its endpoints is in $S$. Let $m_S$ denote the number of edges covered by $S$. Hence $m_S = |E(S,V \setminus S)| +  |E(S)|$.

\begin{proposition}
\label{pro:ms}
With probability at least $1 - \frac{\epsilon}{n{n \choose k}}$ over the choice of $G$,

$$|m_S - p\left({k \choose 2} + k(n-k)\right)| \le O\left(\sqrt{pkn\log{n + 1 \choose k}}\right)$$
\end{proposition}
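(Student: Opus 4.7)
The plan is to observe that $m_S$ is a sum of independent Bernoulli random variables and then apply a standard Chernoff-type tail bound, choosing the deviation so as to match the target failure probability $\epsilon / (n\binom{n}{k})$.

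First I would note that, for fixed $S$ of size $k$, the quantity $m_S = |E(S)| + |E(S, V\setminus S)|$ counts the number of successes of $\binom{k}{2} + k(n-k)$ independent indicator variables, one for each potential edge covered by $S$, each present with probability $p$. Hence $m_S$ is binomially distributed with $N := \binom{k}{2} + k(n-k)$ trials and success probability $p$, so $E[m_S] = pN$, and $N \le kn$.

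Next I would apply Bernstein's inequality (or equivalently the standard multiplicative Chernoff bound) in the form
\[
\Pr[\,|m_S - pN| \ge t\,] \;\le\; 2\exp\!\Bigl(-\tfrac{t^2}{2(pN + t/3)}\Bigr).
\]
Setting $t := C\sqrt{\,pkn \, \log\binom{n+1}{k}\,}$ for a sufficiently large absolute constant $C$, I would verify separately the two regimes $t \le pN$ (where the exponent is essentially $-t^2/(3pN) \le -\tfrac{C^2}{3}\log\binom{n+1}{k}$) and $t > pN$ (where the upper tail is controlled by $\exp(-\Omega(t))$ and the lower tail is vacuous since $m_S \ge 0$). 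In either regime, the probability is at most $2\exp(-\tfrac{C^2}{3}\log\binom{n+1}{k})$.

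Finally I would compare the exponent with $\log\bigl(n\binom{n}{k}/\epsilon\bigr) = \log n + \log\binom{n}{k} - \log\epsilon$. For $k \ge 1$ and $n > n_0 \ge 1/\epsilon$, this quantity is at most $2\log\binom{n+1}{k}$ (using $\binom{n+1}{k} \ge n+1$), so choosing $C$ large enough gives the desired failure probability $\epsilon/(n\binom{n}{k})$. The calculation is routine; the only mild care is in handling the regime where $t$ may exceed the mean $pN$ (which happens for very small $k$ or very small $p$), but this is a bookkeeping matter rather than a real obstacle.
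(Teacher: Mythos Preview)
Your proposal is correct and follows essentially the same approach as the paper: identify $m_S$ as a binomial sum, apply a Chernoff-type tail bound with deviation parameter calibrated to $\log\bigl(n\binom{n}{k}/\epsilon\bigr)$, and then observe that this logarithm is $O\bigl(\log\binom{n+1}{k}\bigr)$ once $n \ge 1/\epsilon$. The paper additionally invokes the hypothesis $p \ge \tfrac{c\log n}{n}$ from Theorem~\ref{thm:gnp} to avoid your two-regime split (since then $pN \gg \log n$ and $t \le pN$ automatically), and your claimed constant $2$ in the bound $\log\bigl(n\binom{n}{k}/\epsilon\bigr) \le 2\log\binom{n+1}{k}$ is slightly off for $k=1$ (one needs roughly $3$), but this is absorbed into your ``$C$ large enough'' and does not affect the argument.
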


\begin{proof}
Fixing $S$, the expectation of $m_S$ under the a random choice of a graph from $G_{n,p}$ is $E[m_S] = p\left({k \choose 2} + k(n-k)\right)$. Observe that this value is $\Theta(pkn)$, and for our choice of $p$ in the statement of Theorem~\ref{thm:gnp}, it is at least $\log n$. As the choices of edges in the $G_{n,p}$ model are independent, we can apply standard bounds on large deviations for sums of independent random variables (such as the Chernoff bound) to deduce that the probability of a deviation by a factor of $a\sqrt{pkn}$ from the expectation is exponentially small in $a^2$. Plugging $a = \sqrt{\log(n{n\choose k}/\epsilon)}$, and observing that for our range of parameters $\log({n +1 \choose k}) = \Omega\left(\log(n{n\choose k}/\epsilon)\right)$, the proposition follows. (We used that assumption that $n \ge 1/\epsilon$. The change from ${n \choose k}$ to ${n+1 \choose k}$ was made so as to address the case that $k=n$.)
\end{proof}

We say that $S$ is {\em typical} if $m_S$ is in the range specified by Proposition~\ref{pro:ms}. Observe that by a union bound over all choices of $S$, it is likely that all $S \subset V$ are typical.

\begin{lemma}
\label{lem:Spermutation}
Given $S$ of size $|S| = k$, let $\pi$ be an arbitrary $k$-permutation that moves every vertex within $S$ but keeps every vertex in $V \setminus S$ fixed. Then conditioned on $S$ being typical, with probability at least $1 - \frac{\epsilon}{n^{k+1}}$ over the choice of $G$,

$$dist(G,G_{\pi}) \ge m_S\frac{{k \choose 2} + k(n-k) - m_S}{{k \choose 2} + k(n-k)} - O\left(\sqrt{m_S k \log n}\right)$$
\end{lemma}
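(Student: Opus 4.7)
Since $\pi$ permutes only $S$ and fixes $V\setminus S$, the induced map $\tilde\pi$ on unordered pairs restricts to a permutation of the $N:=\binom{k}{2}+k(n-k)$ pairs covered by $S$, and is the identity on all other pairs. So $dist(G,G_\pi)$ only counts covered pairs. Writing $X_e\in\{0,1\}$ for the indicator of $e\in E(G)$, a covered edge $e$ is ``destroyed'' by $\pi$ (contributing to $dist$) exactly when $X_e=1$ but $X_{\tilde\pi^{-1}(e)}=0$. Thus $dist(G,G_\pi)=m_S-W$, where $W=\sum_{e}X_eX_{\tilde\pi^{-1}(e)}$ counts the ``preserved'' covered edges. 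Our goal reduces to showing the upper bound $W\le m_S^2/N+O(\sqrt{m_Sk\log n})$ with probability at least $1-\epsilon/n^{k+1}$.

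\textbf{Expectation.} A pair $e$ is fixed by $\tilde\pi$ only when $\{\pi(u),\pi(v)\}=\{u,v\}$ with both $u,v\in S$, i.e.\ when $\{u,v\}$ is a transposition of $\pi$, so the number $F$ of fixed pairs is at most $k/2$. For a non-fixed pair, $X_e$ and $X_{\tilde\pi^{-1}(e)}$ are independent Bernoulli($p$), and for a fixed pair the corresponding term is just $X_e$. Hence
\[
\mathbb{E}[W] \;=\; p^2(N-F)+pF \;\le\; p^2 N + pk/2.
\]
Combining this with the typicality bound $m_S=pN\pm O(\sqrt{pkn\log\binom{n+1}{k}})$ from Proposition~\ref{pro:ms}, one checks that $\mathbb{E}[W] \le m_S^2/N+O(\sqrt{m_Sk\log n})$, so the whole task is to control the fluctuation of $W$ around its mean.

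\textbf{Concentration of $W$.} The permutation $\tilde\pi$ decomposes $U_S$ into disjoint cycles $C_1,\ldots,C_r$, and since the variables $\{X_e\}_{e\in C_j}$ are disjoint across cycles, the contributions $W_j=\sum_{e\in C_j}X_eX_{\tilde\pi^{-1}(e)}$ are independent. Within a single cycle of length $\ell$, $W_j$ is simply a cyclic sum $\sum_i Y_iY_{i+1}$ of products of consecutive i.i.d.\ Bernoulli($p$)'s, which is a degree-$2$ polynomial in independent Bernoullis with unit coefficients. I would bound the upper tail of $W=\sum_j W_j$ either by applying a Chernoff-style (or Bernstein) bound to the independent cycle contributions after splitting each cycle along alternating positions (to make each sub-sum a sum of independent products), or more cleanly by invoking Kim--Vu polynomial concentration for the degree-$2$ polynomial $W$ in the independent variables $\{X_e\}$. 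Either route yields a tail of Bernstein type, roughly $\Pr[W>\mathbb{E}[W]+t]\le\exp(-\Omega(t^2/(\mathbb{E}[W]+t)))$, and taking $t=C\sqrt{\mathbb{E}[W]\cdot k\log n}=O(\sqrt{m_Sk\log n})$ for a sufficiently large constant $C$ pushes the failure probability below $\epsilon/n^{k+1}$.

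\textbf{Where the difficulty lies.} The main obstacle is getting the right concentration scale. A naive Azuma bound on $W$ as a function of the $N$ independent Bernoullis only exploits that each $X_e$ has influence at most $2$, giving deviations of order $\sqrt{N\log n}$, which loses a factor $\sqrt{1/p}$ and is too weak when $p$ is small. One really needs a variance-sensitive (Bernstein/Kim--Vu) bound that picks up $\mathbb{E}[W]=\Theta(Np^2)$ rather than $N$; the cycle decomposition is what makes this feasible since it exposes $W$ as a sum of independent cycle contributions. Once that concentration is in hand, the final estimate follows by substituting $\mathbb{E}[W]\le p^2N+pk/2$ and the typicality bound on $m_S$ into $dist(G,G_\pi)=m_S-W$ and rearranging to recover $m_S(N-m_S)/N - O(\sqrt{m_Sk\log n})$.
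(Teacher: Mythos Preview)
Your argument is correct, but it takes a genuinely different route from the paper's.

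The paper does not work in the unconditioned $G_{n,p}$ model at all. It first conditions on the \emph{exact value} of $m_S$; under this conditioning the $m_S$ covered edges are a uniformly random $m_S$-subset of the $N=|P|$ covered pairs. It then analyses $X=dist(G,G_\pi)$ directly (not $W$): writing $x_i$ for the indicator that the $i$th covered edge is mapped by $\tilde\pi$ to a non-edge, one has $X=\sum_{i=1}^{m_S}x_i$ with $\Pr[x_i=1]=\frac{|P|-f}{|P|}\cdot\frac{|P|-m_S}{|P|-1}$, and an edge-exposure Doob martingale of length $m_S$ with bounded differences $O(1)$ gives, via Azuma, a deviation of $O(\sqrt{m_S\,k\log n})$ with failure probability $\epsilon/n^{k+1}$. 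So the paper's answer to the difficulty you identify (that naive Azuma over the $N$ Bernoullis is too weak by $\sqrt{1/p}$) is not Bernstein/Kim--Vu but simply to condition on $m_S$ and run Azuma over $m_S$ steps instead of $N$ steps.

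Your route---staying with the independent Bernoullis, writing $dist=m_S-W$, and controlling $W$ through the cycle decomposition of $\tilde\pi$ with an alternating split plus Chernoff (or Kim--Vu)---also works, and in fact yields the slightly sharper deviation $O(\sqrt{p\,m_S\,k\log n})$. Two small points to tidy up if you pursue it: (i) your tail bound on $W$ is unconditional while the lemma is stated conditionally on $S$ being typical, so you should note that $\Pr[S\text{ typical}]\ge 1-o(1)$ and absorb the resulting constant; (ii) the sentence ``$\mathbb{E}[W]\le m_S^2/N+O(\sqrt{m_Sk\log n})$'' compares a fixed number to a random variable and is really an inequality that holds on the event $\{S\text{ typical}\}$---the algebra is fine, but the phrasing should make this explicit. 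The paper's approach avoids both of these bookkeeping issues because conditioning on $m_S$ builds the dependence on $m_S$ into the martingale from the start.
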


\begin{proof}
Let $P$ denote the set of unordered pairs of vertices such that either both vertices are from $S$, or one from $S$ and one from $V \setminus S$. Hence $|P| = {k \choose 2} + k(n-k)$. Each member of $P$ may potentially be an edge in $G$. Consider now how the permutation $\pi$ acts on $P$. It may have a fixpoint with respect to $P$: if $\pi$ maps $u$ to $v$ and maps $v$ to $u$, then the unordered pair $(u,v)$ is a fixpoint. Let $f$ denote the number of fixed points, and note that necessarily $f \le k/2 \le |P|/(n+1)$.

Conditioned on the value of $m_S$, for a random choice of $G$, the actual edges covered by $S$ are distributed uniformly at random over the members of $P$. Let $x_i$ be an indicator random variable for the event that $\pi$ maps the vertex pair of the $i$th edge covered by $S$ to a vertex pair that is not an edge in $P$. Denote $X = \sum_{i=1}^{m_S} x_i$, and observe that $dist(G,G_{\pi}) = X$. Hence it remains to estimate $X$.

We claim that $Pr[x_i = 1] = \frac{|P| - f}{|P|} \frac{|P| - m_S}{|P| - 1}$. The first term corresponds to the $i$th edge not being a fixpoint of $\pi$, whereas the second term corresponds to none of the other $m_S - 1$ edges covered by $S$ being the image of edge $i$ under $\pi$. Hence the expectation of $X$ satisfies

$$E[X] = E[\sum_{i=1}^{m_S} x_i] = m_s\frac{|P| - f}{|P|} \frac{|P| - m_S}{|P| - 1} \ge   \frac{n-2}{n-1}\frac{m_S(|P| - m_S)}{|P|}$$

A martingale argument proves that $X$ is concentrated around its mean. Fixing $\pi$ and exposing the $m_f$ covered edges one by one, every new edge can cause $X$ to increase by at most~1 (by the edge itself), and to decrease by at most~1 (by being the image under $\pi$ of a previous edge). Hence the martingale sequence enjoys the bounded difference property. As $E[X] = \Theta(m_S)$, the probability of deviating by a factor of $a\sqrt{m_S}$ is exponentially small in $a^2$ (by Azuma's inequality). Plugging $a = \sqrt{\log(n{n\choose k}/\epsilon)}$ the lemma follows. (The term $n/\epsilon$ that should be present inside the $\log$ disappears due to the $O$ notation and the assumption that $n \ge 1/\epsilon$. The term $\frac{n-2}{n-1}$ can be omitted because its contribution is negligible compared to the standard deviation.)
\end{proof}

We now complete the proof of Theorem~\ref{thm:gnp}. By the union bound, every set $S$ is typical, and then another union bound shows that for all permutations the distances are as in Lemma~\ref{lem:Spermutation}. For $p \ge \frac{c\log n}{n}$ the error term in Proposition~\ref{pro:ms} is small compared to the main term, and we have that for every $S$ it holds that $m_S \simeq p|P|$ (where $P$ is as in Lemma~\ref{lem:Spermutation}). Likewise, in Lemma~\ref{lem:Spermutation} the error term can also be seen to be small compared to the main term, and then $dist(G,G_{\pi}) \simeq p(1 - p)|P|$. The ratio between $|P|$ and $k$ is $n - \frac{k+1}{2}$ and is minimized when $k = n$, giving roughly $n/2$. As $m \simeq \frac{pn^2}{2}$, we get that:

$$dist(G,G_{\pi}) \simeq p(1 - p)|P| \ge p(1 - p)\frac{kn}{2} \simeq (1-p)\frac{k}{n}m$$
establishing that $\delta \simeq 1 - p$.
\end{proof}

Before proving Theorem~\ref{thm:asymmetric}, we establish some properties of the $G_{n,p,d}$ model.

In the $G_{n,p}$ model, every edge is present independently with probability $p$. 
The following proposition provides an extension of this property to the $G_{n,p,d}$ model.

\begin{proposition}
\label{pro:edgeprobability}
Let $F$ be an arbitrary (possibly empty) set of edges, and let $e \not\in F$ be an arbitrary edge. Then in the $G_{n,p,d}$ model, the probability that $e$ is present conditioned on $F$ being present satisfies
$$p \le Pr[e \in E | F \subset E] \le p + \frac{2d}{n-1}$$
\end{proposition}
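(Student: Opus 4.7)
The plan is to decompose the event $\{e \in E\}$ into the two disjoint cases where $e$ lies in the initial $G_{n,p}$ sample $G_0$ and where $e$ is added as an auxiliary edge. Writing $E_{\mathrm{aux}}$ for the set of auxiliary edges, this gives
\[
\Pr[e \in E \mid F \subset E] \;=\; \Pr[e \in G_0 \mid F \subset E] \;+\; \Pr[e \in E_{\mathrm{aux}} \mid F \subset E],
\]
and I would bound each summand separately.

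For the upper bound on the auxiliary contribution, I would use a pointwise per-iteration estimate: when vertex $i$ is processed with current degree $d_i < d$, the process picks $d - d_i$ non-neighbors uniformly at random from a pool of size $n - 1 - d_i \ge n - 1 - d$, so any specific candidate is selected with probability at most $(d - d_i)/(n - 1 - d_i) \le d/(n - 1)$ (using $d \le n - 1$). Summing over the at most two iterations at which an endpoint of $e$ is processed yields $\Pr[e \in E_{\mathrm{aux}} \mid \mathcal{H}] \le 2d/(n - 1)$ for any history $\mathcal{H}$ at those iterations, so this bound survives conditioning on $F \subset E$. For the $G_{n,p}$ contribution, I would argue $\Pr[e \in G_0 \mid F \subset E] \le p$ via a coupling showing that forcing $e$ into $G_0$ absorbs deficit at the endpoints $u, v$, which can only decrease $\Pr[F \subset E]$ (fewer auxiliary opportunities for edges of $F$ incident to $u$ or $v$), so $\{e \in G_0\}$ is weakly negatively correlated with $\{F \subset E\}$.

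For the lower bound $\Pr[e \in E \mid F \subset E] \ge p$, I would observe that any dip of $\Pr[e \in G_0 \mid F \subset E]$ below $p$ is compensated by a corresponding boost in the auxiliary term: configurations with $e \notin G_0$ are over-represented in the conditional distribution and carry extra deficit at $u, v$, which raises the conditional probability that $e$ is added as an auxiliary edge. Concretely, writing the total as $\alpha + (1 - \alpha)\beta$ with $\alpha = \Pr[e \in G_0 \mid F \subset E]$ and $\beta = \Pr[e \in E_{\mathrm{aux}} \mid e \notin G_0,\, F \subset E]$, a short Bayesian calculation reduces the lower bound to a direct inequality between conditional probabilities that follows from the same per-iteration $d/(n-1)$ analysis, now applied in the $e \notin G_0$ branch where both endpoints of $e$ still contribute their full deficit.

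The main obstacle is the cascading dependence of the auxiliary phase on $G_0$: forcing $e$ into $G_0$ alters the degrees of $u$ and $v$, which perturbs the auxiliary edges added at iterations $u, v$, which in turn cascades to later iterations. This rules out any clean independence- or FKG-based argument. The workhorse that survives these dependencies is the uniform, history-independent per-step bound $d/(n-1)$, which is robust under arbitrary conditioning and therefore drives both the upper bound and the compensation behind the lower bound.
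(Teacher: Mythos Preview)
Your decomposition into the $G_{n,p}$ contribution and the auxiliary contribution, together with the per-iteration bound $(d-d_i)/(n-1-d_i)\le d/(n-1)$ summed over the two endpoints, is exactly the paper's argument. The paper's proof is shorter only because it asserts flatly that $\Pr[e\in G_0\mid F\subset E]=p$ (``regardless of $F$''), from which both inequalities follow immediately; you instead argue the weaker $\Pr[e\in G_0\mid F\subset E]\le p$ via a monotone coupling and then recover the lower bound by a compensation argument in the auxiliary phase.

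Your extra caution is warranted: the paper's equality claim is not literally true (conditioning on $F\subset E$ can leak information about $G_0$ through the auxiliary phase), so your inequality-plus-compensation route is the more honest one. That said, your coupling sketch is not complete. The claim that forcing $e$ into $G_0$ ``can only decrease $\Pr[F\subset E]$'' does not follow from the local observation about deficits at $u,v$: the cascade you yourself flag means that removing an auxiliary edge at $u$ can \emph{raise} the deficit at some later vertex $w$, potentially helping an edge of $F$ incident to $w$. Likewise, the ``short Bayesian calculation'' for the lower bound is gestured at but not carried out. The uniform per-step bound $d/(n-1)$ is indeed robust to conditioning and suffices for the upper bound on the auxiliary term, but it does not by itself deliver the correlation inequalities you invoke for the $G_0$ term; those need a genuine coupling that tracks the cascade, or an alternative argument. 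In short: same approach as the paper, with a more accurate diagnosis of the subtlety, but the new steps you introduce to handle that subtlety are still at the level of a plan rather than a proof.
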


\begin{proof}
Let $u$ and $v$ denote the endpoints of $e$ (namely, $e = (u,v)$). The probability that $e$ is present not as an auxiliary edge (namely, as one of the original edges of $G_{n,p}$) is exactly $p$, regardless of $F$. The information conveyed through the event $F \subset E$ may only affect the probability the $e$ is chosen as an auxiliary edge. However, because we condition on $F \subset E$ (rather than say on $F \cap E$ being empty), the probability that $u$ chooses $e$ as an auxiliary edge is at most $\frac{d}{n-1}$, and the same applies to $v$.
\end{proof}

\begin{lemma}
\label{lem:smallp}
Let $G$ be a graph selected at random from $G_{n,p,d}$, where $\frac{c}{n} \le p \le \frac{log^2 n}{n}$ and $d = \lceil (p-1)n \rceil$, where $c$ is a sufficiently large constant and $n$ is sufficiently large. Then $G$ satisfies the following properties almost surely:

\begin{enumerate}
\item The average degree in $G$ is $pn + O(\sqrt{pn}) = d + O(\sqrt{d})$.
\item Every two vertices of $G$ have at most two common neighbors.
\item Every set $S$ of size at most $n/d^2$ induces at most $3|S|$ edges.
\end{enumerate}
\end{lemma}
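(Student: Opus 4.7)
I would prove the three properties in turn, with Proposition~\ref{pro:edgeprobability} supplying the uniform bound $p + 2d/(n-1) = O(p)$ on any edge probability, conditional or otherwise. For property~(1), write $m = |E(G_{n,p})| + |A|$ with $|A|$ the number of auxiliary edges. Chernoff on the Binomial gives $|E(G_{n,p})| = p\binom{n}{2} + O(n\sqrt{p\log n})$ almost surely. For $|A|$, observe that $|A| \le Y := \sum_i (d - \deg_{G_{n,p}}(i))_+$, since vertex $i$'s degree at its iteration can only exceed its $G_{n,p}$-degree. Each summand is the positive part of $d$ minus a Binomial with mean $d + O(1)$ and variance $\Theta(d)$, so $E[Y] = O(n\sqrt d)$. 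For concentration I would exploit that $W_i := (d - \deg_{G_{n,p}}(i))_+$ and $W_j$ share only the single edge $e_{ij}$, so conditional on $e_{ij}$ they are independent; this forces $\mathrm{Cov}(W_i,W_j) = p(1-p)(a-b)^2 \le p$, where $a,b$ are the two conditional means. Combined with per-vertex variance $O(d)$, this yields $\mathrm{Var}(Y) = O(nd)$, and Chebyshev gives $Y = O(n\sqrt d)$ with probability $1 - O(1/n)$. Coupled with the trivial lower bound $m \ge dn/2$, the average degree is $d + O(\sqrt d)$.

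For property~(2), fix a pair $u,v$ and a candidate triple of common neighbors $\{w_1,w_2,w_3\}$; six iterations of Proposition~\ref{pro:edgeprobability} bound the probability that all six edges $(u,w_k),(v,w_k)$ are present by $O(p^6)$, and a union bound over pairs and triples gives $O(n^5 p^6) = O((\log n)^{12}/n) = o(1)$. For property~(3), the bound is vacuous for $s \le 7$ since $\binom{s}{2} \le 3s$, so assume $s \ge 8$. The probability that a fixed $S$ of size $s$ induces at least $3s$ edges is at most $\binom{\binom{s}{2}}{3s}(Cp)^{3s} \le (C'sp)^{3s}$ by iterated Proposition~\ref{pro:edgeprobability}. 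Summed over $S$ via $\binom{n}{s} \le (en/s)^s$, the union bound is at most $\bigl(C'' n s^2 p^3\bigr)^s$; for $s \le n/d^2$ the base is $O(1/(np)) = O(1/c)$, with additional polynomial decay in $n$ for each fixed $s$, so the sum over $s \in [8, n/d^2]$ is $o(1)$.

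The main obstacle is the concentration step in property~(1). The summands of $Y$ are not independent, and a McDiarmid/Azuma bound over the $\binom{n}{2}$ edge indicators gives concentration only at scale $n^{3/2}$, far too crude against $E[Y] = O(n\sqrt d)$. The saving observation is that any two summands overlap in exactly one edge, so conditioning on that edge decouples them and shrinks their covariance to $O(p)$; this brings $\mathrm{Var}(Y)$ down to $O(nd)$, which is just tight enough for Chebyshev to deliver the required bound. Properties~(2) and~(3) are, by contrast, routine once Proposition~\ref{pro:edgeprobability} is invoked as a plug-in replacement for edge independence in $G_{n,p}$.
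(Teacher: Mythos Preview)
Your proofs of (2) and (3) are essentially the paper's: a union bound over pairs-and-triples with six applications of Proposition~\ref{pro:edgeprobability} for (2), and the estimate $\binom{n}{s}\binom{s^2/2}{3s}(O(p))^{3s}$ summed over $s$ for (3). The paper carries out the latter sum by splitting at $s=\sqrt{n}$, but your uniform bound on the base $C'' n s^2 p^3 = O(1/d)$ accomplishes the same thing.

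For (1) the paper is terser than you: it notes $E[|A|]=O(n\sqrt d)$ and asserts concentration ``by a Martingale argument'' without further detail. Your second-moment route---decoupling $W_i$ and $W_j$ by conditioning on the single shared edge $e_{ij}$ to get $\mathrm{Cov}(W_i,W_j)\le p$, hence $\mathrm{Var}(Y)=O(nd)$ and Chebyshev---is a genuinely different and more explicit argument, and it yields failure probability $O(1/n)$ uniformly in $d$. One small slip: edge-exposure McDiarmid on $Y$ has Lipschitz constant at most $2$ over $\binom{n}{2}$ coordinates, so it concentrates at scale $\Theta(n)$, not $n^{3/2}$ as you wrote. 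Your objection survives this correction, however: at scale $n$ the Azuma tail is only $\exp(-\Theta(d))$, which is not $o(1)$ when $p=\Theta(1/n)$ and $d$ stays bounded. In that regime your covariance calculation is not just an alternative but arguably fills in what the paper's one-line sketch leaves open.
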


\begin{proof}
We prove the three properties in the order stated in the lemma.

\begin{enumerate}

\item The expected degree of a vertex in $G_{n,p}$ is $p(n-1)$ and its variance is of the same order. Hence in expectation, the number of auxiliary edges added per vertex is $O(\sqrt{pn})$, and the total expected number of auxiliary edges in $O(n\sqrt{pn})$. The actual number is concentrated around its expectation (e.g., by a Martingale argument). Hence the auxiliary edges contribute $O(\sqrt{pn})$ to the average degree.

\item
There are ${n \choose 2} {n-2 \choose 3}$  possible ways of choosing two vertices and three potential common neighbors. The probability that all required six edges are present is roughly at most $(3p)^6$, by Proposition~\ref{pro:edgeprobability}. As $p^6n^5 = o(1)$ for our choice of $p$, a union bound shows that almost surely no two vertices share three common neighbors

\item Using Proposition~\ref{pro:edgeprobability}, the probability that $S$ induces $M$ edges is at most ${|S|^2/2 \choose M} (3p)^M$. As there are $n \choose \ell$ choices of $S$ of cardinality $\ell$, the property fails to hold with probability at most

\begin{eqnarray*}
\sum_{\ell = 1}^{n/d^2} {n \choose \ell}{\ell^2/2 \choose 3\ell}(3p)^{3\ell} & \simeq & \sum_{\ell = 1}^{n/d^2} \left(\frac{en}{\ell}\right)^{\ell}\left(\frac{e\ell}{6}\right)^{3\ell} \left(\frac{3d}{n}\right)^{3\ell} \\ & \le & \sum_{\ell = 1}^{\sqrt{n}} \frac{e^{4\ell}d^{\ell}}{2^{3\ell}n^{\ell}} + \sum_{\ell = \sqrt{n}}^{n/d^2} \frac{e^{4\ell}}{2^{3\ell}d^{\ell}} \\ & = & o(1).
\end{eqnarray*}
\end{enumerate}

\end{proof}

We can now prove Theorem~\ref{thm:asymmetric}.

\begin{proof}
We handle separately two ranges of values for $k$.

{\bf Large $k$.} This covers the range $n/d^2 < k \le n$. One can verify that the proof of Theorem~\ref{thm:gnp} with some straightforward modifications applies in this case.

{\bf Small $k$.} This covers the range $1 < k \le n/d^2$. Let $S$ be the set of size $k$ permuted by $\pi$. Edges in $E[S,V \setminus S]$ are referred to as {\em outgoing}. By construction, every vertex of $S$ has degree at least $d$. By item~3 of Lemma~\ref{lem:smallp}, $S$ has at least $(d - 6)|S|$ outgoing edges. By item~2 of Lemma~\ref{lem:smallp}, at most two outgoing edges per vertex are preserved by $\pi$. Hence $\pi$ maps at least $(d - 8)k$ edges to non-edges. By item~1 of Lemma~\ref{lem:smallp}, the total number of edges in $G$ is at most $\frac{dn}{2} + O(n\sqrt{d})$. Hence the distance between $G$ and $G_{\pi}$ is $(2 - O(1/\sqrt{d}))\frac{k}{n}m$, giving a value of $\delta$ which is even larger than claimed in the theorem.

\end{proof}

\end{document}